\newtheorem{theorem}{Theorem}
\newtheorem{Lemma}{Lemma}
\newtheorem{Definition}{Definition}
\newtheorem{Assumption}{Assumption}
\definecolor{arash}{rgb}{0.8,0.8,1}
\newcommand{\vect}[1]{\ensuremath{\boldsymbol{\mathrm{#1}}}}
\newcommand{\biggg}{\bBigg@{1.6}}  
\newcounter{lastnote}
\title{\LARGE \bf
Functional Stability of Discounted Markov Decision Processes Using Economic MPC Dissipativity Theory}
\author{Arash Bahari Kordabad and Sebastien Gros 
\thanks{This paper was supported by the
Norwegian Research Council project ``Safe Reinforcement Learning using MPC" (SARLEM).}

\thanks{The authors are with Department of Engineering Cybernetics, Norwegian University of Science and Technology (NTNU), Trondheim, Norway. E-mail:{\tt\small\{arash.b.kordabad, sebastien.gros\}@ntnu.no}}

}
\begin{document}
\bstctlcite{IEEEexample:BSTcontrol}
\maketitle
\thispagestyle{empty}
\pagestyle{empty}
\begin{abstract} This paper discusses the functional stability of closed-loop Markov Chains under optimal policies resulting from a discounted optimality criterion, forming Markov Decision Processes (MDPs). We investigate the stability of MDPs in the sense of probability measures (densities) underlying the state distributions and extend the dissipativity theory of Economic Model Predictive Control in order to characterize the MDP stability. This theory requires a so-called storage function satisfying a dissipativity inequality. In the probability measures space and for the discounted setting, we introduce new dissipativity conditions ensuring the MDP stability. We then use finite-horizon optimal control problems in order to generate valid storage functionals. In practice, we propose to use Q-learning to compute the storage functionals.
\end{abstract}
\section{Introduction}
Markov Decision Processes (MDPs) provide a generic and standard framework for optimal stochastic control of discrete-time dynamical systems, where the stage cost and transition probability depend only on the current state and the current input of the system~\cite{puterman2014markov}. For an MDP, a policy is a mapping from the state space into the input space and determines how to select the input based on the observation of the current state. Solving an MDP refers to finding an optimal policy that typically minimizes the expected value of the discounted infinite-horizon sum of stage costs. Reinforcement Learning (RL) and Dynamic programming are two common techniques to solve MDPs~\cite{sutton2018reinforcement}.

Most of the research has been done in order to find the optimal policy or verify the optimality of a given policy. However, in general, optimality may not lead to the stability of the closed-loop Markov Chain. Stability of the Markov Chains has been extensively studied in \cite{meyn2012markov}. However, this framework provides results that are not easily related to MDPs and optimality criteria. To the best of our knowledge, there are limited results characterizing the stability of MDPs as an outcome of the interplay between its objective function and its dynamics. 

In order to characterize the closed-loop stability of MDPs, we extend the concept of stability and dissipativity developed in the context of Economic Model Predictive Control (EMPC)~\cite{MPCbook}. EMPC optimizes a sum of stage costs that is not necessarily positive definite~\cite{rawlings2009optimizing}. Dissipativity is a key concept in EMPC to argue about the asymptotic stability of the closed-loop system under the optimal policy~\cite{amrit2011economic}. This theory is based on a so-called \textit{storage function} satisfying the dissipativity inequality. The storage function can be used to convert an EMPC to a \textit{tracking} MPC having a stage cost that is lower bounded by a $\mathcal{K}_\infty$ function. Under the dissipativity condition, one can show that the tracking MPC has the same optimal policy as the EMPC. Moreover, the value function resulting from the tracking MPC can be used as a Lyapunov function to show the closed-loop stability of the system under the optimal policy. 

Dissipativity is well-known for EMPC schemes having an undiscounted cost and deterministic dynamics. In the discounted setting, finding the Lyapunov function still is challenging even for positive-definite stage costs~\cite{postoyan2016stability}. In the discounted setting, the discount factor plays a vital role in the closed-loop stability. Recently the dissipativity theory has been extended to the discounted setting with deterministic dynamics~\cite{zanon2021new}. These conditions are called \textit{Strong Discounted Strict Dissipativity (SDSD)}.

We use the generalization of the classic dissipativity theory by making an argument on the measure space underlying the MDP rather than on the state space itself. This idea was first discussed in \cite{gros2021dissipativity}, but was limited to undiscounted MDPs, where the dissipativity is fairly straightforward. In this paper, we consider MDPs with a general functional stage cost. We use the concept of $D$-stability \cite{gros2021dissipativity} and introduce generalized functional dissipativity conditions for MDPs with a discounted objective function. We label these conditions Functional Strong Discounted Strict Dissipativity (FSDSD). These conditions require the transition probability, the stage cost, and the discount factor of the MDP to satisfy certain inequalities. We show that if a given problem is FSDSD, then the $D$-stability of MDP follows.

Moreover, \cite{gros2021dissipativity} covers only the stability analysis, while we discuss it in the learning context and provide practical aspects of the method. Indeed, first, we show that an undiscounted finite-horizon Optimal Control Problem (OCP) is able to capture the optimal value functionals and policy resulting from a discounted infinite-horizon OCP. Then we use a parameterized undiscounted finite-horizon OCP to approximate the action-value functional and show that this framework yields a valid storage function that satisfies FSDSD conditions. Q-learning will be proposed as a practical way of learning the OCP parameters.
\section{Problem Setting}\label{sec:PS}
In this section, we detail Markov Decision Processes (MDPs) and formulate their representation in the state density space. We consider an MDP with the following transition probability density: 
\begin{align}\label{eq:MDP}
   \xi(\vect s_{k+1}|\vect s_k,\vect a_k)\,\,,
\end{align}
where $\vect s_k \in \mathcal{X}\subset\mathbb{R}^{n}$, $\vect a_k \in \mathcal{U}\subset\mathbb{R}^{m}$, and $\vect s_{k+1}$ are the current state, input, and subsequent state, respectively, and $k\in\mathbb{I}_{\geq 0}$ is the discrete-time index. The input $\vect a_k$ applied to the system for a given state $\vect s_k$ is selected by a deterministic policy $\vect \pi : \mathcal{X} \rightarrow \mathcal{U}$. We label $\mathcal{P}$ the set of policies such that the conditional measure \eqref{eq:MDP} is $\sigma$-finite, i.e., $\vect\pi\in \mathcal{P}$. We denote $\rho_0\in\Xi$ as the initial state $\vect s_0$ distribution, i.e $\vect s_0\sim \rho_0$, where $\Xi$ is the set of measures supported on  $\mathcal{X}$. We define probability measure sequences $\rho^{\vect\pi}_{k}\in\Xi$ generated by the closed-loop Markov Chain $\xi(\vect s^{+}|\vect s,\vect \pi(\vect s))$ with policy $\vect\pi$, as:
\begin{align}
    \rho^{\vect\pi}_{k+1}(\cdot)=\mathcal{T}_{\vect\pi} \rho^{\vect\pi}_{k}(\cdot)=\int_{\mathcal{X}} \xi(\cdot|\vect s,\vect \pi(\vect s)) \rho^{\vect\pi}_k(\mathrm{d}\vect s)\,\,,
\end{align}
where $\mathcal{T}_{\vect\pi}:\Xi\rightarrow\Xi$ is defined as the transition operator on measures and $\rho^{\vect\pi}_0=\rho_0$, $\forall \vect\pi$. In general, characterization of convergence of the state sequences $\{\vect s_{k}\}_{k=0}^\infty$ resulting from the closed-loop Markov Chain $\xi(\vect s^{+}|\vect s,\vect \pi(\vect s))$ is very difficult. To tackle this issue, in this paper, instead of working with state sequences $\vect s_k$, we propose to work with probability measure sequences $\{\rho^{\vect\pi}_{k}\}_{k=0}^\infty$, describing the probability distribution of the states $\vect s_k$ over time. The selected (possibly nonlinear) stage cost functional, denoted by ${\mathcal{L}}:\Xi\times\mathcal{U}\rightarrow\mathbb{R}$, does not have a specific structure and it will be an important point in the rest of the paper. One can select it as follows:
\begin{align}\label{eq:stage:func}
    {\mathcal{L}}[\rho^{\vect\pi}_k,\vect\pi]=\mathbb{E}_{\vect s\sim \rho^{\vect\pi}_{k}} \left[ \ell(\vect s,\vect\pi(\vect s))\right]\,\,,
\end{align}
where $\ell:\mathcal{X}\times\mathcal{U}\rightarrow\mathbb{R}$ is a stage cost function. In fact, stage cost \eqref{eq:stage:func} is a particular case of functional stage cost, where it linearly depends on the stage function. Using cost functionals ${\mathcal{L}}$ that do not necessarily take the form  \eqref{eq:stage:func} is a key in this paper to discuss the functional stability of closed-loop Markov Chains. We then denote the optimal steady-state measure by $\rho^\star$ and the corresponding stage cost by ${\mathcal{L}}_0$. Without loss of generality we can assume that  ${\mathcal{L}}_0=0$ in order to have a well-posed value functional. Clearly, if this does not hold, one can shift the stage cost to achieve ${\mathcal{L}}_0=0$.
Let us consider the following discounted infinite-horizon OCP:
\begin{subequations}\label{eq:VL}
\begin{align}
    V^\star[\rho_0]=\min_{\vect\pi}&\,\, \sum_{k=0}^{\infty}\gamma^k \mathcal{L}[\rho^{\vect\pi}_k,\vect\pi] \\
    \mathrm{s.t.}& \,\, \rho^{\vect\pi}_{k+1}=\mathcal{T}_{\vect\pi} \rho^{\vect\pi}_{k},\quad \rho^{\vect\pi}_{0}=\rho_{0}\,\,,
\end{align}
\end{subequations}
where $\gamma\in(0,1)$ is the discount factor and $V^\star:\Xi\rightarrow\mathbb{R}$ is the optimal value functional. We denote the optimal policy by $\vect\pi^\star$, solution of \eqref{eq:VL}. In the following, we make a standard assumption on the stage cost functional and $V^\star$.
\begin{Assumption}\label{assum1} 
1) We assume that $\mathcal{L}[\rho,\vect\pi]$ is bounded, $\forall \rho\in\Xi,\forall \vect \pi\in\mathcal{P}$

2) There exists a non-empty set of measures, denoted by $\Xi_0$, such that for all $\rho_0\in \Xi_0$, $V^\star[\rho^{\vect\pi^\star}_k]$ remains bounded, $\forall k$.
\end{Assumption}

The optimal action-value functional ${Q}^\star$ and advantage functional ${A}^\star$ associated to \eqref{eq:VL} are defined as follows:
\begin{subequations}
\begin{align}\label{eq:Q:F}
     {Q}^\star[\rho,\vect \pi]&:={\mathcal{L}}[\rho,\vect \pi]+\gamma {V}^\star[\mathcal{T}_{\vect\pi}\rho],\quad \forall \rho\in\Xi,\forall \vect \pi\in\mathcal{P}\,,\\
     {A}^\star[\rho,\vect \pi]&:= {Q}^\star[\rho,\vect \pi]-{V}^\star[\rho],\quad \forall \rho\in\Xi,\forall \vect \pi\in\mathcal{P}\,.
\end{align}
\end{subequations}
Then from the Bellman equation, we have:
\begin{align}\label{eq:Bell}
V^\star[\rho]=Q^\star[\rho,\vect \pi^\star]=\min_{\vect \pi} Q^\star[\rho,\vect \pi] , \quad \forall \rho\in \Xi
\end{align}
One can verify the following , $\forall \rho\in \Xi$:
\begin{align}\label{eq:Bell:A}
    0&=A^\star[\rho,\vect \pi^\star]=\min_{\vect \pi} A^\star[\rho,\vect \pi],\,\,\,
    \vect\pi^\star\in \mathrm{arg}\min_{\vect \pi} A^\star[\rho,\vect \pi]\,\,.
\end{align}
We will use these results in Section \ref{sec:FA}. The next section presents the conditions on the MDP \eqref{eq:MDP} such that the sequence of measures under optimal policy converges to the optimal steady-state measure in some sense.
\section{Stability of MDPs}\label{sec:stability}
In this section, we will detail the stability of MDPs in the sense of probability measures. We extend the dissipativity theory to propose a Lyapunov functional establishing the MDP stability in the sense of $\lim_{k\rightarrow\infty} \rho^{\vect\pi^\star}_k$. In order to discuss this limit formally, we first define the following concept.
\begin{Definition} \textbf{(Dissimilarity measure)} For any $\rho,\rho'\in\Xi$, we define $D(\rho||\rho')$ as a dissimilarity measure on measure space, that maps any two measures $\rho$ and $\rho'$ to the real non-negative numbers, and $D(\rho||\rho)=0$, $\forall \rho$.
\end{Definition}

One can show that the Kullback-Leibler divergence, the  Wasserstein metric, and the total variation distance are Dissimilarity measures. Using the Dissimilarity measure concept, we can define $D$-stability of Markov Chains \cite{gros2021dissipativity}.

\begin{Definition} \textbf{($D$-stability)} The closed-loop Markov Chain $\xi(\vect s^{+}|\vect s,\vect \pi(\vect s))$ with policy $\vect\pi$ is $D$-stable with respect to the optimal steady probability measure $\rho^\star$ and dissimilarity measure $D$ if, for any $\epsilon>0$ there exists a $\delta(\epsilon)>0$ and a $K\in\mathbb{I}_{\geq 0}$ such that $D(\rho_0\|\rho^\star)<\delta(\epsilon)$ implies $D(\rho^{\vect\pi}_k\|\rho^\star)<\epsilon$, $\forall k\geq K$. Moreover, if $\lim_{k\rightarrow\infty} D(\rho^{\vect\pi}_k\|\rho^\star)=0$ holds almost everywhere, then the closed-loop Markov Chain is $D$-asymptotically stable. 
\end{Definition}
The concept of $D$-stability provides a framework to argue about $\lim_{k\rightarrow\infty} \rho^{\vect\pi^\star}_k$ in the sense of dissimilarity measures. Next lemma connects the functional stability to existence of a Lyapunov functional $V$, satisfying proper conditions.
\begin{Lemma}\label{lemma:Lyp} The closed-loop Markov Chain $\xi(\vect s^{+}|\vect s,\vect \pi^\star(\vect s))$ is $D$-asymptotically stable with respect to the  optimal steady probability measure $\rho_{\star}$ and dissimilarity measure $D$, if there exists a Lyapunov functional $V:\Xi\rightarrow\mathbb{R}^{\geq 0}$, satisfying:
\begin{subequations}
\begin{align}
     \beta_0(D(\rho_0\|\rho_{\star}))\leq &V[\rho_0]\leq \beta_1(D(\rho_0\|\rho_{\star}))\,\,,\label{eq:Lyp1}\\
     V[\mathcal{T}_{\vect\pi^\star} \rho_0]-&V[\rho_{0}]\leq -\beta_2 (D(\rho_0\|\rho_{\star}))\,\,,\label{eq:Lyp2}
\end{align}
\end{subequations}
for some $\beta_0,\beta_1, \beta_2\in\mathcal{K}_\infty$.
\end{Lemma}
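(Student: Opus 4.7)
The plan is to mirror the classical discrete-time Lyapunov stability argument, lifted from the state space to the measure space $\Xi$, using nothing beyond the bounds \eqref{eq:Lyp1} and the one-step descent \eqref{eq:Lyp2}. I split the proof into two parts: first the $\epsilon$-$\delta$ stability part, then the attractivity part, each invoking the invertibility and monotonicity of $\mathcal{K}_\infty$ functions.

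For the stability part, given $\epsilon>0$ I would set $\delta(\epsilon):=\beta_1^{-1}(\beta_0(\epsilon))$, which is well-defined since $\beta_0,\beta_1\in\mathcal{K}_\infty$ are continuous and strictly increasing. If $D(\rho_0\|\rho^\star)<\delta(\epsilon)$, the upper bound in \eqref{eq:Lyp1} gives $V[\rho_0]\leq\beta_1(D(\rho_0\|\rho^\star))<\beta_0(\epsilon)$. Because $\beta_2\geq 0$, the descent condition \eqref{eq:Lyp2} implies $V[\mathcal{T}_{\vect\pi^\star}\rho]\leq V[\rho]$ for every $\rho\in\Xi$, so by a simple induction $V[\rho^{\vect\pi^\star}_k]\leq V[\rho_0]<\beta_0(\epsilon)$ for all $k\geq 0$. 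Combining with the lower bound in \eqref{eq:Lyp1} yields $\beta_0(D(\rho^{\vect\pi^\star}_k\|\rho^\star))\leq V[\rho^{\vect\pi^\star}_k]<\beta_0(\epsilon)$, and applying $\beta_0^{-1}$ gives $D(\rho^{\vect\pi^\star}_k\|\rho^\star)<\epsilon$ for all $k\geq 0$, which fulfils the $D$-stability definition with $K=0$.

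For attractivity, I would telescope \eqref{eq:Lyp2} over $k=0,\dots,N-1$ to obtain
\begin{align*}
 \sum_{k=0}^{N-1}\beta_2\bigl(D(\rho^{\vect\pi^\star}_k\|\rho^\star)\bigr)\leq V[\rho_0]-V[\rho^{\vect\pi^\star}_N]\leq V[\rho_0],
\end{align*}
since $V\geq 0$. Letting $N\to\infty$ shows the series on the left is summable, hence $\beta_2(D(\rho^{\vect\pi^\star}_k\|\rho^\star))\to 0$. Because $\beta_2\in\mathcal{K}_\infty$ is continuous with $\beta_2^{-1}(0)=0$, this forces $D(\rho^{\vect\pi^\star}_k\|\rho^\star)\to 0$, delivering $D$-asymptotic stability whenever $V[\rho_0]$ is finite (ensured by the upper bound in \eqref{eq:Lyp1} on the effective domain of the initial measure).

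The main obstacle I anticipate is not in the telescoping itself but in handling the absence of metric structure on $\Xi$: $D$ need not satisfy a triangle inequality, so I cannot appeal to any uniform continuity of $\mathcal{T}_{\vect\pi^\star}$ with respect to $D$, and the entire argument must be routed through the scalar quantity $V[\rho^{\vect\pi^\star}_k]$, which is precisely what the $\mathcal{K}_\infty$ sandwich \eqref{eq:Lyp1} enables. The ``almost everywhere'' qualifier in the definition of $D$-asymptotic stability is naturally absorbed here, since the convergence $D(\rho^{\vect\pi^\star}_k\|\rho^\star)\to 0$ holds for every $\rho_0$ with $V[\rho_0]<\infty$, and any measure-theoretic null set on which the Lyapunov functional fails to be finite is excluded at the outset.
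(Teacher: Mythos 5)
Your proof is correct and self-contained: the $\epsilon$--$\delta$ part via $\delta(\epsilon)=\beta_1^{-1}(\beta_0(\epsilon))$ and monotone decrease of $V$ along the trajectory, plus the telescoping/summability argument for attractivity, is exactly the standard discrete-time Lyapunov comparison argument lifted to the measure space, and your observation that everything must be routed through the scalar $V[\rho^{\vect\pi^\star}_k]$ (since $D$ carries no metric structure) is the right one. The paper itself does not prove this lemma in-text --- it only cites \cite{gros2021dissipativity} --- so your argument supplies the proof the paper omits, and it is the same approach that reference takes.
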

\begin{proof}
    The proof can be found in \cite{gros2021dissipativity}.
\end{proof}
In the following we will connect the Lyapunov functional in Lemma \ref{lemma:Lyp} with the value functional under some conditions. Next definition develops the SDSD conditions for undiscounted MDPs, where the stage cost is a generic functional.
\begin{Definition} \textbf{(Functional Strong Discounted Strict Dissipativity (FSDSD))}
MDP \eqref{eq:MDP} with functional stage cost $ \mathcal{L}$ and discount factor $\gamma$ is \textit{Functional Strong Discounted Strict Dissipative (FSDSD)}, If there exists a bounded \textit{``storage"} functional $\lambda$ such that $\lambda [\rho^\star]=0$, satisfying:
\begin{subequations}\label{eq:SDSD}
\begin{align}
   & \mathcal{L}[\rho,\vect \pi] -\gamma \lambda [\mathcal{T}_{\vect\pi} \rho]+\lambda[\rho] \geq \alpha (D(\rho\|\rho_{\star}))\,\,,\label{eq:SDSD1}\\
        &\mathcal{L}[\rho,\vect \pi] - \lambda [\mathcal{T}_{\vect\pi} \rho]+\lambda[\rho]+(\gamma-1)V^\star[\mathcal{T}_{\vect\pi} \rho] \geq \alpha (D(\rho\|\rho_{\star}))\,.\label{eq:SDSD2}
\end{align}
\end{subequations}
for some $\alpha(\cdot)\in \mathcal{K}_\infty$\footnote{A function $\alpha: \mathbb{R}_{\geq 0}\rightarrow\mathbb{R}_{\geq 0}$ is said to belong to class $\mathcal{K}_\infty$, if $\alpha$ is continuous, strictly increasing, unbounded and $\alpha(0)=0$.} and $\forall \rho\in\Xi,\forall\vect \pi\in\mathcal{P}$, where $\rho$ is the probability measure of state $\vect s$ and $\rho^\star$ is the optimal steady measure.
\end{Definition}
Note that condition \eqref{eq:SDSD1} corresponds to the common discounted dissipativity condition~\cite{grune2016discounted}, but generalized to a functional space~\cite{gros2021dissipativity}. Condition \eqref{eq:SDSD2} has been introduced in \cite{zanon2021new} in a non-functional form to show the stability of deterministic nonlinear systems with discounted cost.
For an undiscounted setting with $\gamma\rightarrow 1$, two conditions in \eqref{eq:SDSD} coincide, and correspond to the condition proposed in \cite{amrit2011economic}. For an FSDSD problem, we define the \textit{rotated functional stage cost} $\bar {\mathcal{L}}:\Xi\rightarrow\mathbb{R}$ as follows:
\begin{align}\label{eq:rot:L}
    \bar {\mathcal{L}}[\rho,\vect\pi]=\mathcal{L}[\rho,\vect\pi] -\gamma \lambda [\mathcal{T}_{\vect\pi} \rho]+\lambda[\rho]\,\,.
\end{align}
Then if \eqref{eq:SDSD1} holds, we have:
\begin{align}\label{eq:Lbar}
\bar {\mathcal{L}}[\rho,\vect\pi]\geq     \alpha (D(\rho\|\rho_{\star}))\,\,.
\end{align}
Indeed, condition \eqref{eq:SDSD1} allows us to convert the original general stage cost $\mathcal{L}$ to the rotated stage cost $\bar{\mathcal{L}}$. For any measure $\rho$, the rotated stage cost functional $\bar{\mathcal{L}}$ is lower bounded by a $\mathcal{K}_\infty$ function applied on the selected dissimilarity measure, even if the original stage cost $\mathcal{L}$ has not such property. Next theorem relates the optimal value functional and optimal policy resulting from $\mathcal{L}$ to the optimal value functional and optimal policy resulting from $\bar{\mathcal{L}}$.
\begin{theorem}\label{theo:1}
If MDP \eqref{eq:MDP} is FSDSD, then the following discounted OCP:
\begin{subequations}\label{eq:V:bar:def}
\begin{align}
   \bar{V}^\star[\rho_0]:=\min_{\vect\pi}&\,\, \sum_{k=0}^{\infty}\gamma^k \bar{\mathcal{L}}[\rho^{\vect\pi}_k,\vect\pi]\,\,, \\
    \mathrm{s.t.}& \,\, \rho^{\vect\pi}_{k+1}=\mathcal{T}_{\vect\pi} \rho^{\vect\pi}_{k},\quad\rho^{\vect\pi}_0=\rho_0\,\,,
\end{align}
\end{subequations}
yields the same optimal policy $\vect\pi^\star$ as \eqref{eq:VL}, $\forall \rho_0\in\Xi_0$, and:
\begin{align}\label{eq:VVbar}
    \bar{V}^\star[\rho_0]=V^\star[\rho_0]+ \lambda[\rho_{0}]\,\,.
\end{align}
\end{theorem}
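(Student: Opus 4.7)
The plan is to show that the two cost functionals differ by a quantity depending only on the initial measure $\rho_0$ and not on the policy $\vect\pi$. The main tool is a discounted telescoping identity generated by the storage functional $\lambda$.

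First, I would fix an arbitrary policy $\vect\pi\in\mathcal{P}$ and a finite horizon $N$, and expand the partial sum using definition \eqref{eq:rot:L}:
\begin{align*}
\sum_{k=0}^{N}\gamma^k \bar{\mathcal{L}}[\rho^{\vect\pi}_k,\vect\pi] = \sum_{k=0}^{N}\gamma^k \mathcal{L}[\rho^{\vect\pi}_k,\vect\pi] - \sum_{k=0}^{N}\gamma^k\bigl(\gamma\lambda[\rho^{\vect\pi}_{k+1}]-\lambda[\rho^{\vect\pi}_k]\bigr).
\end{align*}
The second sum is a telescoping sum that collapses to $\lambda[\rho_0]-\gamma^{N+1}\lambda[\rho^{\vect\pi}_{N+1}]$. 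Since $\lambda$ is bounded by assumption and $\gamma\in(0,1)$, the tail $\gamma^{N+1}\lambda[\rho^{\vect\pi}_{N+1}]$ vanishes as $N\to\infty$. Combined with Assumption~\ref{assum1} ensuring that the original discounted sum converges on $\Xi_0$, I obtain the clean identity
\begin{align*}
\sum_{k=0}^{\infty}\gamma^k \bar{\mathcal{L}}[\rho^{\vect\pi}_k,\vect\pi] \;=\; \sum_{k=0}^{\infty}\gamma^k \mathcal{L}[\rho^{\vect\pi}_k,\vect\pi] \;+\; \lambda[\rho_0],
\end{align*}
valid for every admissible policy.

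Next, because the additive term $\lambda[\rho_0]$ does not depend on $\vect\pi$, I take the infimum over $\vect\pi\in\mathcal{P}$ on both sides. This immediately yields the equality $\bar{V}^\star[\rho_0]=V^\star[\rho_0]+\lambda[\rho_0]$ stated in \eqref{eq:VVbar}, and moreover that the argmin sets of the two problems coincide. Hence the optimal policy $\vect\pi^\star$ of \eqref{eq:VL} is also optimal for \eqref{eq:V:bar:def}.

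The main obstacle I expect is a purely technical one: ensuring that the telescoping manipulation is justified in the infinite-horizon limit. This requires the boundedness of $\lambda$ (given in the FSDSD definition) to kill the tail term $\gamma^{N+1}\lambda[\rho^{\vect\pi}_{N+1}]$, and requires $V^\star[\rho_0]$ to be well-defined and finite so that splitting the sum into two separately convergent pieces is legitimate; this is exactly what Assumption~\ref{assum1} provides on $\Xi_0$. Once those points are made explicit, no further use of the dissipativity inequalities \eqref{eq:SDSD1}--\eqref{eq:SDSD2} is needed for this particular theorem — condition \eqref{eq:SDSD1} only enters later to certify the lower bound \eqref{eq:Lbar} on $\bar{\mathcal{L}}$, which in turn will be exploited when constructing a Lyapunov functional from $\bar{V}^\star$.
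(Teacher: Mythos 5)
Your proposal is correct and follows essentially the same route as the paper's proof: substitute the definition of $\bar{\mathcal{L}}$, collapse the storage terms by a discounted telescoping sum so that only $\lambda[\rho_0]$ and a vanishing tail survive, and then minimize over $\vect\pi$, noting that the policy-independent offset $\lambda[\rho_0]$ preserves the minimizer. Your treatment of the tail term $\gamma^{N+1}\lambda[\rho^{\vect\pi}_{N+1}]$ via the boundedness of $\lambda$ is in fact slightly more careful than the paper's displayed limit, and your observation that the dissipativity inequalities themselves are not needed for this theorem matches the paper's argument.
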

\begin{proof}
For an FSDSD problem, $\bar {\mathcal{L}}$ exists. Substitution of \eqref{eq:rot:L} into the cost of \eqref{eq:VL} and using a telescopic sum argument, one observes that:
\begin{align}\label{eq:proof1:1}
    &\sum_{k=0}^{\infty}\gamma^k \mathcal{L}[\rho^{\vect\pi}_k,\vect\pi]=\sum_{k=0}^{\infty}\gamma^k \left( \bar{\mathcal{L}}[\rho^{\vect\pi}_k,\vect\pi]+\gamma \lambda [\rho^{\vect\pi}_{k+1}]-\lambda[\rho^{\vect\pi}_{k}]\right)=\nonumber\\&-\lambda[\rho_0]+\lim_{N\rightarrow\infty}\gamma^N \bar{\mathcal{L}}[\rho^{\vect\pi}_N,\vect\pi]+ \sum_{k=0}^{\infty}\gamma^k \bar{\mathcal{L}}[\rho^{\vect\pi}_k,\vect\pi]=\nonumber\\&-\lambda[\rho_0]+\sum_{k=0}^{\infty}\gamma^k \bar{\mathcal{L}}[\rho^{\vect\pi}_k,\vect\pi]\,\,.
\end{align}
Note that under assumption \ref{assum1}, all terms in \eqref{eq:proof1:1} remain bounded and 
$
    \lim_{N\rightarrow\infty}\gamma^N \bar{\mathcal{L}}[\rho^{\vect\pi}_N,\vect\pi]=0\,\,.
$
Taking $\min_{\vect\pi}$ on both sides of \eqref{eq:proof1:1} results in \eqref{eq:VVbar} and the optimal policy $\vect\pi^\star$ from \eqref{eq:VL}, minimizing the right-hand side, minimizes the left-hand side as well. 
\end{proof}
Theorem \ref{theo:1} states that for an MDP that satisfies the FSDSD conditions, we can find an equivalent OCP that yields the same optimal policy and the value functional that is shifted by $\lambda$. In the next, we assume that, for any measure $\rho$, the optimal value functional $\bar {V}^\star[\rho]$ is upper bounded by a $\mathcal{K}_\infty$ function applied on the selected dissimilarity measure. This will be useful in showing Lyapunov stability. 
\begin{Assumption}\label{lemma:2} We assume following for some $\alpha_1(\cdot)\in \mathcal{K}_\infty$:
\begin{align}\label{eq:lower:V}
    \bar {V}^\star[\rho]\leq \alpha_1(D(\rho\|\rho_{\star})), \qquad \forall \rho\in \Xi_0
\end{align}
\end{Assumption}
Next theorem states that for an FSDSD MDP, $\bar {V}^\star$, defined in \eqref{eq:V:bar:def}, is a Lyapunov functional in order to prove the $D$-stability of the closed-loop Markov Chain $\xi(\vect s^{+}|\vect s,\vect \pi^\star(\vect s))$ with respect to the optimal steady measure $\rho^\star$.
\begin{theorem}\label{theo:2}
Under assumption \ref{assum1}, if the MDP with transition probability $\xi(\vect s^{+}|\vect s,\vect a)$, stage cost $\mathcal{L}$ and discount factor $\gamma$ is functional SDSD, then $\bar {V}^\star$, defined in \eqref{eq:V:bar:def}, is a Lyapunov functional for the closed-loop Markov Chain $\xi(\vect s^{+}|\vect s,\vect \pi^\star(\vect s))$ with optimal policy $\vect\pi^\star$, solution of \eqref{eq:VL}.
\end{theorem}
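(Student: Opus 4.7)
The plan is to verify the three conditions of Lemma \ref{lemma:Lyp} for the candidate $V = \bar{V}^\star$ one by one. First I need to confirm that $\bar V^\star$ is well-defined and non-negative: since FSDSD gives $\bar{\mathcal{L}}[\rho,\vect\pi]\geq \alpha(D(\rho\|\rho_\star))\geq 0$ from \eqref{eq:Lbar}, the series defining $\bar V^\star[\rho_0]$ in \eqref{eq:V:bar:def} is a sum of non-negative discounted terms, hence $\bar V^\star \geq 0$. Moreover, evaluating $\bar V^\star[\rho^\star]$ along the constant trajectory at $\rho^\star$ gives $0$, so this is consistent with a Lyapunov functional.

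For the upper bound in \eqref{eq:Lyp1}, Assumption \ref{lemma:2} provides exactly $\bar V^\star[\rho_0] \leq \alpha_1(D(\rho_0\|\rho_\star))$ on $\Xi_0$, so set $\beta_1 = \alpha_1$. For the lower bound, the first term of the series in \eqref{eq:V:bar:def} is $\bar{\mathcal{L}}[\rho_0,\vect\pi^\star] \geq \alpha(D(\rho_0\|\rho_\star))$ by \eqref{eq:Lbar}, and all subsequent discounted terms are non-negative, giving $\bar V^\star[\rho_0] \geq \alpha(D(\rho_0\|\rho_\star))$; thus set $\beta_0 = \alpha$.

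The main work is the descent condition \eqref{eq:Lyp2}. I would start from the Bellman equation satisfied by $\bar V^\star$ along the optimal policy,
\begin{align*}
\bar V^\star[\rho_0] = \bar{\mathcal{L}}[\rho_0,\vect\pi^\star] + \gamma \bar V^\star[\mathcal{T}_{\vect\pi^\star}\rho_0],
\end{align*}
which is a consequence of Theorem \ref{theo:1} since $\vect\pi^\star$ is also optimal for \eqref{eq:V:bar:def}. Rearranging yields
\begin{align*}
\bar V^\star[\mathcal{T}_{\vect\pi^\star}\rho_0] - \bar V^\star[\rho_0] = -\bar{\mathcal{L}}[\rho_0,\vect\pi^\star] + (1-\gamma)\,\bar V^\star[\mathcal{T}_{\vect\pi^\star}\rho_0].
\end{align*}
The crucial step is to bound the right-hand side using the second FSDSD inequality \eqref{eq:SDSD2}. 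Substituting $\mathcal{L} = \bar{\mathcal{L}} + \gamma \lambda[\mathcal{T}_{\vect\pi}\rho] - \lambda[\rho]$ from \eqref{eq:rot:L} into \eqref{eq:SDSD2} and using $\bar V^\star = V^\star + \lambda$ from \eqref{eq:VVbar}, the $\lambda$ terms telescope and \eqref{eq:SDSD2} becomes exactly
\begin{align*}
\bar{\mathcal{L}}[\rho,\vect\pi] - (1-\gamma)\,\bar V^\star[\mathcal{T}_{\vect\pi}\rho] \geq \alpha(D(\rho\|\rho_\star)),
\end{align*}
for all $\rho,\vect\pi$. Specializing to $\rho=\rho_0$ and $\vect\pi=\vect\pi^\star$ gives $\bar V^\star[\mathcal{T}_{\vect\pi^\star}\rho_0] - \bar V^\star[\rho_0] \leq -\alpha(D(\rho_0\|\rho_\star))$, so $\beta_2 = \alpha$ works.

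The main obstacle I anticipate is the algebraic manipulation that turns \eqref{eq:SDSD2} into the clean inequality involving $\bar V^\star$ and $\bar{\mathcal{L}}$: one has to keep track of the $\gamma$ versus $1-\gamma$ factors carefully and invoke \eqref{eq:VVbar} from Theorem \ref{theo:1} at the right moment, since otherwise the $V^\star$ term in \eqref{eq:SDSD2} and the $\lambda$ terms from the definition of $\bar{\mathcal{L}}$ do not visibly combine. Once that identity is established, assembling $\beta_0,\beta_1,\beta_2$ and invoking Lemma \ref{lemma:Lyp} is immediate.
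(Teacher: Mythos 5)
Your proposal is correct and follows essentially the same route as the paper's proof: the upper bound of \eqref{eq:Lyp1} from Assumption \ref{lemma:2}, the lower bound from \eqref{eq:Lbar}, and the descent condition \eqref{eq:Lyp2} from combining \eqref{eq:SDSD2} with the Bellman equation, then invoking Lemma \ref{lemma:Lyp}. The only (cosmetic) difference is that you rewrite \eqref{eq:SDSD2} entirely in the rotated quantities $\bar{\mathcal{L}}$ and $\bar V^\star$ before specializing to $\vect\pi^\star$, whereas the paper works with $\mathcal{L}$, $V^\star$ and $\lambda$ separately and identifies $\bar V^\star = V^\star + \lambda$ at the end; the algebra is identical.
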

\begin{proof} Condition \eqref{eq:lower:V} directly implies the upper bound of \eqref{eq:Lyp1}. Using \eqref{eq:Lbar}, we have:
 \begin{align}
    \alpha(D(\rho\|\rho_{\star})) \leq \bar {V}^\star[\rho]\,\,,
 \end{align}
which results in the lower bound of \eqref{eq:Lyp1}. From the Bellman equation for OCP \eqref{eq:VL}, we have:
\begin{align}\label{eq:bell}
    \mathcal{L}[\rho^{\vect\pi^\star}_k,\vect\pi^\star] -V^\star[\rho^{\vect\pi^\star}_{k}]+\gamma V^\star[\rho^{\vect\pi^\star}_{k+1}]=0\,\,,
\end{align} 
Rearranging \eqref{eq:SDSD2} and subtracting $V^\star[\rho^{\vect\pi^\star}_{k}]$ from both sides yields:
\begin{align}\label{eq:proof2:1}
    & V^\star[\rho^{\vect\pi^\star}_{k+1}]+ \lambda[\rho^{\vect\pi^\star}_{k+1}]-V^\star[\rho^{\vect\pi^\star}_{k}]- \lambda[\rho^{\vect\pi^\star}_{k}]\leq \nonumber\\
  &-\alpha (D(\rho^{\vect\pi^\star}_{k}\|\rho_{\star}))+\mathcal{L}[\rho^{\vect\pi^\star}_k,\vect\pi^\star] -V^\star[\rho^{\vect\pi^\star}_{k}]+\gamma V^\star[\rho^{\vect\pi^\star}_{k+1}]\nonumber\\&\stackrel{(\ref{eq:bell})}=-\alpha (D(\rho^{\vect\pi^\star}_{k}\|\rho_{\star}))\,\,,
\end{align}
where we replace $\rho$, and  $\mathcal{T}_{\vect\pi}\rho$ in \eqref{eq:SDSD2} by $\rho^{\vect\pi^\star}_{k}$, $\vect\pi^\star$ and  $\rho^{\vect\pi^\star}_{k+1}$, respectively, and we have used \eqref{eq:bell} in the last equality. Then from \eqref{eq:VVbar} and \eqref{eq:proof2:1}, we have:
\begin{align}
   &\bar{V}^\star[\rho^{\vect\pi^\star}_{k+1}]-\bar{V}^\star[\rho^{\vect\pi^\star}_{k}]=V^\star[\rho^{\vect\pi^\star}_{k+1}]+ \lambda[\rho^{\vect\pi^\star}_{k+1}]\\ &\qquad\qquad -V^\star[\rho^{\vect\pi^\star}_{k}]- \lambda[\rho^{\vect\pi^\star}_{k}]\leq -\alpha (D(\rho^{\vect\pi^\star}_{k}\|\rho_{\star}))\,\,,\nonumber
\end{align}
which concludes \eqref{eq:Lyp2}. Then $\bar{V}^\star$ satisfies the  conditions of Lemma \ref{lemma:Lyp} and the closed-loop Markov Chain $\xi(\vect s^{+}|\vect s,\vect \pi^\star(\vect s))$ is $D$-asymptotically stable with respect to the  optimal steady probability measure $\rho_{\star}$ and dissimilarity measure $D$.
\end{proof}
Theorem \ref{theo:2} states the conditions that imply a $D$-asymptotically stabilizing policy for the FSDSD MDPs. However, finding the optimal policy under dissipativity conditions of theorem \ref{theo:2} and the storage functional that satisfies \eqref{eq:SDSD} are very difficult. In the next section, we address this problem by using a parameterized finite-horizon OCP scheme.
\section{Stabilizing Functional Approximator}\label{sec:FA}
Reinforcement Learning (RL) provides powerful tools to solve MDPs in practice. For instance, Q-learning is based on capturing the optimal action-value function of a given MDP, from which an optimal policy can be extracted. In this method, a parameterized action-value function is provided, and Q-learning attempts to find the optimal parameters that result in the best estimation of the optimal action-value. Deep Neural Networks (DNNs) are a common choice to provide a generic parameterization~\cite{deepRL}. However, formal analysis of the stability properties of closed-loop systems is very challenging for DNNs based function approximators. Therefore, using a more structured approximator such as the MPC scheme can be beneficial. The idea of using the function approximator based on a finite-horizon OCP has been introduced and justified in ~\cite{gros2019data}, where an EMPC was used as an approximator for RL algorithms. In fact, it has been shown that modifying stage cost and terminal cost in a parameterized MPC can capture the optimal value functions of MDPs even if an inaccurate model is used in the MPC scheme~\cite{gros2019data}. Moreover, this approximator has great capability to satisfy system constraints and safety~\cite{zanon2020safe}. Recent researches have developed further in using such approximators in the RL context~\cite{kordabad2021mpc}.

This section extends this parameterization to the functional space, where the arguments are on the measure space underlying the MDP. We use an OCP-based approximator for the optimal action-value functional to capture valid storage functional and verify the FSDSD conditions. The next theorem expresses that an undiscounted finite-horizon OCP is able to capture the optimal value functionals and policy of \eqref{eq:VL}. Note that using the undiscounted OCP will be a key to establishing stabilizing approximator results.
\begin{theorem}\label{the:f:i}
Under assumption \ref{assum1}, there exists a terminal cost functional $\hat T:\Xi\rightarrow \mathbb{R}$ and a stage cost functional $\hat L:\Xi\times\mathcal{P}\rightarrow\mathbb{R}$ such that the following undiscounted finite-horizon OCP:
\begin{subequations}\label{eq:f:v}
\begin{align}
    \hat V^\star[\rho]=\min_{\vect\pi}& \,\,\hat V^{\vect\pi}[\rho]:=\hat T[\rho^{\vect\pi}_N]+\sum_{k=0}^{N-1} \hat{L}[\rho^{\vect\pi}_k,\vect\pi],\\
    \mathrm{s.t.}& \,\, \rho^{\vect\pi}_{k+1}=\mathcal{T}_{\vect\pi} \rho^{\vect\pi}_{k},\quad \rho^{\vect\pi}_0=\rho\,\,,
\end{align}
\end{subequations}
for all $\rho\in\Xi_0$, results in the following:
\begin{enumerate}[i)]
  \item $\hat{\vect\pi}^\star=\vect\pi^\star$, \label{eq:Pi:eq}
  \item $\hat V^\star[\rho]=V^\star[\rho]$, \label{eq:VV:eq}
  \item $\hat Q^\star[\rho,\vect\pi]=Q^\star[\rho,\vect\pi]$, \label{eq:QQ:eq}
\end{enumerate}
where $\hat{\vect\pi}^\star$ is the optimal policy resulting from \eqref{eq:f:v} and:
\begin{align}\label{eq:bell:hat}
    \hat Q^\star[\rho,\vect\pi]:=\hat{L}[\rho,\vect\pi]+\hat V^\star[\mathcal{T}_{\vect\pi} \rho]\,\,.
\end{align}
\end{theorem}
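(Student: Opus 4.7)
My plan is to exhibit $\hat T$ and $\hat L$ explicitly, in the spirit of the classical EMPC trick of setting the terminal cost equal to the infinite-horizon optimal value functional and compensating for the missing discount inside the stage cost. Concretely I propose
\[
\hat T[\rho] \;:=\; V^\star[\rho], \qquad \hat L[\rho,\vect\pi] \;:=\; \mathcal{L}[\rho,\vect\pi] + (\gamma-1)\,V^\star[\mathcal{T}_{\vect\pi}\rho],
\]
or equivalently $\hat L[\rho,\vect\pi] = Q^\star[\rho,\vect\pi] - V^\star[\mathcal{T}_{\vect\pi}\rho]$. The heuristic is that the $+\gamma V^\star$ piece of the correction combines with $\mathcal{L}$ to reproduce $Q^\star$ via \eqref{eq:Q:F}, while the $-V^\star$ piece telescopes against $\hat T[\rho^{\vect\pi}_N]$ so that only the Bellman contribution at $\rho_0$ survives.

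The core step is to evaluate $\hat V^{\vect\pi}[\rho_0]$ for an arbitrary $\vect\pi \in \mathcal{P}$. Substituting the definitions, reindexing $\sum_{k=0}^{N-1} V^\star[\rho^{\vect\pi}_{k+1}] = \sum_{k=1}^{N} V^\star[\rho^{\vect\pi}_{k}]$, and cancelling $V^\star[\rho^{\vect\pi}_N]$ against $\hat T[\rho^{\vect\pi}_N]$, I expect to obtain the clean identity
\[
\hat V^{\vect\pi}[\rho_0] \;=\; Q^\star[\rho_0,\vect\pi] \;+\; \sum_{k=1}^{N-1} A^\star\!\left[\rho^{\vect\pi}_k,\vect\pi\right].
\]
This is a purely algebraic rearrangement of finitely many terms; no infinite-horizon limit is invoked. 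From here (i) and (ii) follow immediately: \eqref{eq:Bell} and \eqref{eq:Bell:A} give $Q^\star[\rho_0,\vect\pi] \geq V^\star[\rho_0]$ and $A^\star \geq 0$, hence $\hat V^{\vect\pi}[\rho_0] \geq V^\star[\rho_0]$ for every admissible $\vect\pi$. Choosing $\vect\pi = \vect\pi^\star$ attains both bounds simultaneously, since by \eqref{eq:Bell:A} every advantage along the optimal trajectory vanishes. This yields $\hat V^\star[\rho_0] = V^\star[\rho_0]$ and identifies $\hat{\vect\pi}^\star = \vect\pi^\star$ as an optimal policy of \eqref{eq:f:v}.

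Claim (iii) is then a one-line substitution: plugging (ii) into the definition \eqref{eq:bell:hat} of $\hat Q^\star$ and using the proposed $\hat L$,
\[
\hat Q^\star[\rho,\vect\pi] \;=\; \mathcal{L}[\rho,\vect\pi] + (\gamma-1)V^\star[\mathcal{T}_{\vect\pi}\rho] + V^\star[\mathcal{T}_{\vect\pi}\rho] \;=\; Q^\star[\rho,\vect\pi].
\]
The only delicate point I foresee is ensuring that every quantity appearing in the telescoping is finite so that the rearrangement is legal: boundedness of $\mathcal{L}$ (Assumption~\ref{assum1}, part~1) together with boundedness of $V^\star$ along trajectories initialized in $\Xi_0$ (part~2) should suffice, since $N$ is finite. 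Notably, neither FSDSD nor any storage functional is used—this is a pure representability statement, which is precisely why it can serve as the foundation for constructing a valid $\lambda$ in the subsequent sections.
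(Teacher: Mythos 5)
Your proposal is correct and follows essentially the same route as the paper: the identical choices $\hat T = V^\star$ and $\hat L[\rho,\vect\pi] = Q^\star[\rho,\vect\pi] - V^\star[\mathcal{T}_{\vect\pi}\rho]$, the same telescoping identity $\hat V^{\vect\pi}[\rho_0] = Q^\star[\rho_0,\vect\pi] + \sum_{k=1}^{N-1} A^\star[\rho^{\vect\pi}_k,\vect\pi]$, and the same use of \eqref{eq:Bell} and \eqref{eq:Bell:A} to conclude (i)--(iii). Your phrasing of the optimality step via the lower bound $\hat V^{\vect\pi}[\rho_0] \geq V^\star[\rho_0]$ attained at $\vect\pi^\star$ is a minor stylistic variant of the paper's ``minimizes each term'' argument, not a different method.
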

\begin{proof} We select the terminal cost functional $\hat T$ and the stage cost functional $\hat L$ as follows:
\begin{subequations}\label{eq:TL}
\begin{align}
 \hat T[\rho]&= V^\star [\rho]\,\,,\\
  \hat L[\rho,\vect\pi]&= Q^\star[\rho,\vect\pi]- V^\star[\mathcal{T}_{\vect\pi}\rho]\label{eq:L:hat}\,\,.
\end{align}
\end{subequations}
Under assumption \ref{assum1}, the terminal cost and stage costs have finite values on $\Xi_0$. Substitution of \eqref{eq:TL} into \eqref{eq:f:v} and using telescopic sum, we have:
\begin{align}
\hat V^{\vect\pi}[\rho]&=  \hat T[\rho^{\vect\pi}_N]+
\sum_{k=0}^{N-1}\, \hat L[\rho^{\vect\pi}_k,\vect\pi]\nonumber\\&= V^\star[\rho^{\vect\pi}_N]+
\sum_{k=0}^{N-1}\,  Q^\star[\rho^{\vect\pi}_k,\vect\pi]- V^\star[\rho^{\vect\pi}_{k+1}]\nonumber\\&=Q^\star[\rho,\vect\pi]+ \sum_{k=1}^{N-1}\, Q^\star[\rho^{\vect\pi}_k,\vect\pi]-V^\star[\rho^{\vect\pi}_k]\nonumber\\&=Q^\star[\rho,\vect\pi]+  \sum_{k=1}^{N-1}\, A^\star[\rho^{\vect\pi}_k,\vect\pi]\,\,.
\end{align}
 From \eqref{eq:Bell:A}, we know that $\vect\pi^\star$ minimizes $A^\star[\rho^{\vect\pi}_k,\vect\pi]$ and $Q^\star[\rho,\vect\pi]$, hence it minimizes $\hat V^{\vect\pi}[\rho]$, i.e.,:
 \begin{align}
     \hat{\vect\pi}^\star=&\mathrm{arg}\min_{\vect\pi}\hat V^{\vect\pi}[\rho]\\=&\mathrm{arg}\min_{\vect\pi}Q^\star[\rho,\vect\pi]+  \sum_{k=1}^{N-1}\, A^\star[\rho^{\vect\pi}_k,\vect\pi]=\vect\pi^\star\nonumber
 \end{align}
and it yields (\ref{eq:Pi:eq}). Then substitution of the optimal policy $\vect\pi^\star$ in the cost function of \eqref{eq:f:v} reads:
\begin{align}\label{eq:VV:f}
\hat V^{\star}[\rho]=&\hat V^{\vect\pi^\star}[\rho]=Q^\star[\rho,\vect\pi^\star]+\sum_{k=1}^{N-1}\, \underbrace{A^\star[\rho_k^{\vect\pi^\star},\vect\pi^\star]}_{\stackrel{(\ref{eq:Bell:A})}=0}\stackrel{(\ref{eq:Bell})}=V^\star[\rho]\,,
\end{align}
which results in (\ref{eq:VV:eq}). Moreover, from \eqref{eq:bell:hat} and \eqref{eq:L:hat} we have:
\begin{align}
    \hat Q^\star[\rho,\vect\pi])&=\hat{L}[\rho^{\vect\pi},\vect\pi]+\hat V^\star[\mathcal{T}_{\vect\pi} \rho]\\&=Q^\star[\rho,\vect\pi]- V^\star[\mathcal{T}_{\vect\pi}\rho]+\hat V^\star[\mathcal{T}_{\vect\pi} \rho]\stackrel{(\ref{eq:VV:f})}=Q^\star[\rho,\vect\pi]\nonumber
\end{align}
which yields (\ref{eq:QQ:eq}).
\end{proof}
Theorem \ref{the:f:i} states that an undiscounted finite-horizon OCP can estimate value functional, action-value functional and optimal policy of a discounted infinite horizon OCP. Then this allows us to use a function approximator based on the undiscounted finite-horizon OCP for the discounted MDP. Similar results can be found in \cite{zanon2021stability} for value functions of classic MDPs. More specifically, let us consider the following finite-horizon undiscounted OCP as an approximator for the value functional, parameterized by $\vect\theta$:
\begin{subequations}\label{eq:MPC}
\begin{align}
   {V}_{\vect\theta}[\rho_0]=\min_{\vect\pi}&\,\, -\lambda_{\vect\theta}[\rho_0]+T_{\vect\theta}[\rho^{\vect\pi}_{N}]+\sum_{k=0}^{N-1} {\mathcal{L}_{\vect\theta}}[\rho^{\vect\pi}_k,\vect\pi] \\
    \mathrm{s.t.}& \,\, \rho^{\vect\pi}_{k+1}=\mathcal{T}_{\vect\pi} \rho^{\vect\pi}_{k},\quad \rho^{\vect\pi}_{0}=\rho_0\,\, ,
\end{align}
\end{subequations}
where  $V_{\vect\theta}$, $\lambda_{\vect\theta}$, $T_{\vect\theta}$ and $\mathcal{L}_{\vect\theta}$ are the parameterized value functional, storage functional, terminal cost and stage cost, respectively. Note that the term $-\lambda_{\vect\theta}[\rho_0]$ only depends on the first measure sequence and does not affect on the optimal policy resulting from \eqref{eq:MPC}. The term $-\lambda_{\vect\theta}[\rho_0]$ is added to the cost to have consistency with the EMPC context~\cite{MPCbook}. We denote the parameterized policy by $\vect\pi_{\vect\theta}$, solution of \eqref{eq:MPC}.

The parameterized action-value functional associated with \eqref{eq:MPC} is defined as follows:
\begin{align}\label{eq:Q:def}
    {Q}_{\vect\theta}[\rho,\vect \pi]:=-\lambda_{\vect\theta}[\rho]+{\mathcal{L}_{\vect\theta}}[\rho,\vect \pi]+{\Psi}_{\vect\theta}[\mathcal{T}_{\vect\pi}\rho]
\end{align}
where 
\begin{align}\label{eq:Psi}
 {\Psi}_{\vect\theta}[\rho]:=\lambda_{\vect\theta}[\rho]+V_{\vect\theta}[\rho]   
\end{align}
In fact, one can verify that the action-value functional ${Q}_{\vect\theta}[\rho,\vect \pi]$, value functional ${V}_{\vect\theta}[\rho]$ and policy  $\vect\pi_{\vect\theta}$ satisfy the fundamental Bellman equations. We next make a standard assumption on the terminal cost functional $T_{\vect\theta}$ and the parameterization of OCP \eqref{eq:MPC}.

\begin{Assumption}\label{assum:LT}We assume that the terminal cost functional $T_{\vect\theta}$ satisfies 
 $   T_{\vect\theta}[\rho] \geq 0,\, \forall \rho\in\Xi$.
\end{Assumption}
\begin{Assumption}\label{assum:rich}
We assume that the parameterization of \eqref{eq:MPC} is rich enough to capture the optimal action-value functional, i.e., there exists an optimal parameters vector $\vect\theta^\star$ such that:
\begin{subequations}\label{eq:C1}
\begin{align}\label{eq:QQ}
    {Q}_{\vect\theta^\star}[\rho,\vect \pi]&={Q}^\star[\rho,\vect \pi], \\ {V}_{\vect\theta^\star}[\rho]&={V}^\star[\rho]\,\,,\label{eq:VV}
\end{align}
\end{subequations}
\end{Assumption}
Assumption \ref{assum:rich} requires a universal approximator in the functional space. Note that this assumption may not hold in practice. In the next section, we detail Q-learning as a practical way to approach this assumption asymptotically.
\begin{theorem}\label{theo:3}
Under assumptions \ref{assum1}, \ref{assum:LT} and \ref{assum:rich}, $\lambda_{\vect\theta^\star}[\rho]$ satisfies \eqref{eq:SDSD}, if the following holds for some $\alpha_0(\cdot)\in \mathcal{K}_\infty$:
\begin{align}\label{eq:L:assum}
    {\mathcal{L}_{\vect\theta}}[\rho,\vect \pi]&\geq     \alpha_0 (D(\rho\|\rho_{\star})),\qquad \forall \rho\in\Xi,\,\forall\vect \pi\in\mathcal{P}
\end{align}
\end{theorem}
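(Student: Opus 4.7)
The plan is to exploit Assumption \ref{assum:rich} (richness of the parameterization) to turn the definition of $Q_{\vect\theta^\star}$ into an algebraic identity for $\mathcal{L}_{\vect\theta^\star}$ that \emph{coincides} with the left-hand side of \eqref{eq:SDSD2}, and then derive \eqref{eq:SDSD1} from the same identity by noting that the extra term that shows up is a non-negative multiple of the finite-horizon cost-to-go $\Psi_{\vect\theta^\star}$.

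More concretely, first I would substitute $V_{\vect\theta^\star}=V^\star$ (from \eqref{eq:VV}) into the definitions \eqref{eq:Q:def}--\eqref{eq:Psi} to write
\begin{align*}
Q_{\vect\theta^\star}[\rho,\vect\pi]=-\lambda_{\vect\theta^\star}[\rho]+\mathcal{L}_{\vect\theta^\star}[\rho,\vect\pi]+\lambda_{\vect\theta^\star}[\mathcal{T}_{\vect\pi}\rho]+V^\star[\mathcal{T}_{\vect\pi}\rho],
\end{align*}
equate this with $Q^\star[\rho,\vect\pi]=\mathcal{L}[\rho,\vect\pi]+\gamma V^\star[\mathcal{T}_{\vect\pi}\rho]$ using \eqref{eq:QQ}, and solve for $\mathcal{L}_{\vect\theta^\star}$. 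This yields the identity
\begin{align*}
\mathcal{L}_{\vect\theta^\star}[\rho,\vect\pi]=\mathcal{L}[\rho,\vect\pi]-\lambda_{\vect\theta^\star}[\mathcal{T}_{\vect\pi}\rho]+\lambda_{\vect\theta^\star}[\rho]+(\gamma-1)V^\star[\mathcal{T}_{\vect\pi}\rho],
\end{align*}
whose right-hand side is exactly the left-hand side of \eqref{eq:SDSD2}. Combining with \eqref{eq:L:assum} then delivers \eqref{eq:SDSD2} with $\alpha=\alpha_0$.

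For \eqref{eq:SDSD1}, I would take the same identity and rewrite
\begin{align*}
\mathcal{L}[\rho,\vect\pi]-\gamma\lambda_{\vect\theta^\star}[\mathcal{T}_{\vect\pi}\rho]+\lambda_{\vect\theta^\star}[\rho]=\mathcal{L}_{\vect\theta^\star}[\rho,\vect\pi]+(1-\gamma)\bigl(\lambda_{\vect\theta^\star}[\mathcal{T}_{\vect\pi}\rho]+V^\star[\mathcal{T}_{\vect\pi}\rho]\bigr)=\mathcal{L}_{\vect\theta^\star}[\rho,\vect\pi]+(1-\gamma)\Psi_{\vect\theta^\star}[\mathcal{T}_{\vect\pi}\rho].
\end{align*}
The key observation is that when the definition \eqref{eq:MPC} of $V_{\vect\theta}$ is inserted into $\Psi_{\vect\theta}=\lambda_{\vect\theta}+V_{\vect\theta}$, the leading $-\lambda_{\vect\theta}[\rho]$ term cancels, leaving
\begin{align*}
\Psi_{\vect\theta^\star}[\rho]=T_{\vect\theta^\star}[\rho_N^{\vect\pi_{\vect\theta^\star}}]+\sum_{k=0}^{N-1}\mathcal{L}_{\vect\theta^\star}[\rho_k^{\vect\pi_{\vect\theta^\star}},\vect\pi_{\vect\theta^\star}],
\end{align*}
which is non-negative by Assumption \ref{assum:LT} and by \eqref{eq:L:assum}. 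Since $1-\gamma>0$, the extra term is non-negative and \eqref{eq:SDSD1} again follows from \eqref{eq:L:assum}.

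The only subtlety, and the step I expect to require the most care, is the normalization $\lambda_{\vect\theta^\star}[\rho^\star]=0$ built into the FSDSD definition. Evaluating the identity at $(\rho^\star,\vect\pi^\star)$ (using $\mathcal{T}_{\vect\pi^\star}\rho^\star=\rho^\star$ and $V^\star[\rho^\star]=\mathcal{L}_0=0$) forces $\mathcal{L}_{\vect\theta^\star}[\rho^\star,\vect\pi^\star]=0$, which is consistent with \eqref{eq:L:assum} at $\rho=\rho^\star$; however $\lambda_{\vect\theta^\star}[\rho^\star]$ itself equals $T_{\vect\theta^\star}[\rho^\star]$ from the reduced form of $\Psi_{\vect\theta^\star}[\rho^\star]$, so the cleanest resolution is to absorb this constant into the storage functional (shifting $\lambda_{\vect\theta^\star}$ and $T_{\vect\theta^\star}$ by $T_{\vect\theta^\star}[\rho^\star]$), which leaves $Q_{\vect\theta^\star}$, $V_{\vect\theta^\star}$, Assumption \ref{assum:LT} and the inequalities above unchanged while enforcing $\lambda_{\vect\theta^\star}[\rho^\star]=0$.
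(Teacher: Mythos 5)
Your core argument is correct and is essentially the paper's own proof: you use Assumption \ref{assum:rich} together with \eqref{eq:Q:def}--\eqref{eq:Psi} to identify $\mathcal{L}_{\vect\theta^\star}[\rho,\vect\pi]$ with the left-hand side of \eqref{eq:SDSD2}, so that \eqref{eq:L:assum} yields \eqref{eq:SDSD2} directly, and you then obtain \eqref{eq:SDSD1} by adding the term $(1-\gamma)\Psi_{\vect\theta^\star}[\mathcal{T}_{\vect\pi}\rho]\geq 0$, whose non-negativity follows from the cancellation of $-\lambda_{\vect\theta}[\rho_0]$ in $\Psi_{\vect\theta}=\lambda_{\vect\theta}+V_{\vect\theta}$ (the paper's \eqref{eq:MPC2}) combined with Assumption \ref{assum:LT} and \eqref{eq:L:assum}. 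The one place you go beyond the paper is the normalization $\lambda_{\vect\theta^\star}[\rho^\star]=0$, and there your fix does not work as stated: replacing $\lambda_{\vect\theta^\star}$ by $\lambda_{\vect\theta^\star}-c$ with $c=\lambda_{\vect\theta^\star}[\rho^\star]$ leaves \eqref{eq:SDSD2} invariant but shifts the left-hand side of \eqref{eq:SDSD1} by $-(1-\gamma)c$, which is nonzero in the discounted setting; for $c>0$ the inequality would survive only if $\Psi_{\vect\theta^\star}[\mathcal{T}_{\vect\pi}\rho]\geq c$ for all $\rho,\vect\pi$, i.e.\ if $\rho^\star$ minimizes $\Psi_{\vect\theta^\star}$, which Assumption \ref{assum:LT} (only $T_{\vect\theta}\geq 0$) does not guarantee. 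Since the theorem claims only the two inequalities \eqref{eq:SDSD} --- and the paper's proof likewise never verifies $\lambda_{\vect\theta^\star}[\rho^\star]=0$ --- this does not invalidate your proof of the stated result, but the closing remark as written is incorrect and should be dropped or repaired.
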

\begin{proof}
From assumption \ref{assum:rich}, we have:
\begin{align}\label{eq:sdsd:p1}
  &{\mathcal{L}}[\rho,\vect \pi]+\gamma {V}^\star[\mathcal{T}_{\vect\pi} \rho]\stackrel{(\ref{eq:Q:F})}={Q}^\star[\rho,\vect \pi]\stackrel{(\ref{eq:QQ})}={Q}_{\vect\theta^\star}[\rho,\vect \pi]\\&\quad\stackrel{(\ref{eq:Q:def})}=-\lambda_{\vect\theta^\star}[\rho]+{\mathcal{L}_{\vect\theta^\star}}[\rho,\vect \pi]+{\Psi}_{\vect\theta^\star}[\mathcal{T}_{\vect\pi} \rho]\nonumber\\&\quad\stackrel{(\ref{eq:Psi})}=-\lambda_{\vect\theta^\star}[\rho]+{\mathcal{L}_{\vect\theta^\star}}[\rho,\vect \pi]+\lambda_{\vect\theta^\star}[\mathcal{T}_{\vect\pi} \rho]+{V}_{\vect\theta^\star}[\mathcal{T}_{\vect\pi} \rho]\nonumber\\&\quad\stackrel{(\ref{eq:L:assum})}\geq -\lambda_{\vect\theta^\star}[\rho]+\alpha_0 (D(\rho_{0}\|\rho_{\star}))+\lambda_{\vect\theta^\star}[\mathcal{T}_{\vect\pi} \rho]+{V}^{\star}[\mathcal{T}_{\vect\pi} \rho].\nonumber
\end{align}
Rearranging \eqref{eq:sdsd:p1} results in \eqref{eq:SDSD2}. Moreover, from \eqref{eq:Psi} we can write ${\Psi}_{\vect\theta}$ as the following OCP:
\begin{subequations}\label{eq:MPC2}
\begin{align}
   {\Psi}_{\vect\theta}[\rho_0]=\min_{\vect\pi}&\,\, T_{\vect\theta}[\rho^{\vect\pi}_{N}]+\sum_{k=0}^{N-1} {\mathcal{L}_{\vect\theta}}[\rho^{\vect\pi}_k,\vect\pi] \\
    \mathrm{s.t.}& \,\, \rho^{\vect\pi}_{k+1}=\mathcal{T}_{\vect\pi} \rho^{\vect\pi}_{k},\qquad \rho^{\vect\pi}_{0}=\rho_{0}\,\,,
\end{align}
\end{subequations}
Then using \eqref{eq:L:assum} and assumption \ref{assum:LT}, the cost of \eqref{eq:MPC2} is non-negative and we have  $0 \leq {\Psi}_{\vect\theta}[\rho]$, $\forall \rho$. Then:
\begin{align}\label{eq:Psi:pos}
    0 \leq {\Psi}_{\vect\theta^\star}[\mathcal{T}_{\vect\pi} \rho]&\stackrel{(\ref{eq:Psi})}= \lambda_{\vect\theta^\star}[\mathcal{T}_{\vect\pi} \rho]+{V}_{\vect\theta^\star}[\mathcal{T}_{\vect\pi} \rho]\nonumber\\&\stackrel{(\ref{eq:VV})}=\lambda_{\vect\theta^\star}[\mathcal{T}_{\vect\pi} \rho]+{V}^{\star}[\mathcal{T}_{\vect\pi} \rho]\,\,.
\end{align}
By rearranging and multiplying both sides of \eqref{eq:Psi:pos} by the positive factor $1-\gamma$:
\begin{align}\label{eq:gamma:lam}
    &-(1-\gamma)V^\star[\mathcal{T}_{\vect\pi} \rho]\leq(1-\gamma)\lambda_{\vect\theta^\star}[\mathcal{T}_{\vect\pi} \rho]\,\,,
    \end{align}
    or equivalently
\begin{align}\label{eq:gama:lam2}
&(\gamma-1)V^\star[\mathcal{T}_{\vect\pi} \rho]-\lambda_{\vect\theta^\star}[\mathcal{T}_{\vect\pi} \rho]\leq-\gamma\lambda_{\vect\theta^\star}[\mathcal{T}_{\vect\pi} \rho]\,\,.
\end{align}
By adding $\mathcal{L}[\rho,\vect a]+\lambda_{\vect\theta^\star}[\rho]$ to both sides of \eqref{eq:gama:lam2}, we have:
\begin{align}
    &\mathcal{L}[\rho,\vect \pi]+\lambda_{\vect\theta^\star}[\rho]-\gamma\lambda_{\vect\theta^\star}[\mathcal{T}_{\vect\pi} \rho]\geq\nonumber\\&\mathcal{L}[\rho,\vect \pi]+\lambda_{\vect\theta^\star}[\rho]-\lambda_{\vect\theta^\star}[\mathcal{T}_{\vect\pi} \rho]+\nonumber\\&\qquad\qquad(\gamma-1) V^\star[\mathcal{T}_{\vect\pi} \rho]\stackrel{(\ref{eq:SDSD2})}\geq\alpha_0 (D(\rho\|\rho_{\star}))\,\,,
\end{align}
and it results in \eqref{eq:SDSD1}.
\end{proof}
Assumption \ref{assum:rich} is valid only for the FSDSD problems. In fact, for a non-FSDSD problem it is not possible to find $\vect\theta^\star$ that satisfies conditions of assumption \ref{assum:rich}. This approach enforces $D$-stability conditions for a given MDP, and if it is not stabilizable (non-FSDSD), then assumption \ref{assum:rich} is invalid. Therefore theorem \ref{theo:3} implicitly assumes that the given problem is FSDSD and states that using undiscounted finite-horizon approximator \eqref{eq:MPC} yields a valid storage functional that satisfies FSDSD conditions \eqref{eq:SDSD}. The stage cost condition \eqref{eq:L:assum} can be satisfied using constrained steps in the learning algorithm or providing a positive functional by construction. The details of these methods for deterministic systems can be found in \cite{Arash2021verification}. However, a detailed discussion on functional space is out of our scope. In general, finding $\vect\theta^\star$ that satisfies the conditions of assumption \ref{assum:rich} is very difficult. However, Q-learning is a practical way to fulfill assumption \ref{assum:rich}. Q-learning uses a Least-Square (LS) optimization and approach assumption \ref{assum:rich} asymptotically for a large number of data. Next section details this approach.

\section{Practical Implementation}\label{sec:PI}
In this section, we focus on the classic MDPs with stage cost in the form of \eqref{eq:stage:func} and a given deterministic initial state, i.e., $\rho_0=\delta_{\vect s_0}(\cdot)$, where $\delta_{\vect s_0}(\cdot)$ is the Dirac measure centered on the fixed point $\vect s_0$. This assumption is appropriate for fully observable MDPs since the current state is deterministic and available. We ought to stress here that we are still using functional stage cost functional as an important concept in the current work. Partially Observable MDPs (POMDPs) are the class of MDPs that the current state is estimated based on historical data of the system. Recently, Moving Horizon Estimation has been used in order to tackle POMDPs in combination with RL and MPC~\cite{Hossein2021MHE}. Note that all the results in the previous sections are valid, when the current state distribution is a Dirac measure. Using $\rho_0=\delta_{\vect s_0}(\cdot)$, \eqref{eq:VL} reads:

\begin{align}\label{eq:VL1}
    v^\star(\vect s_0):=&V^\star[\delta_{\vect s_0}(\cdot)]=\min_{\vect\pi}\,\, \sum_{k=0}^{\infty}\gamma^k \mathbb{E}_{\vect s\sim \rho^{\vect\pi}_{k}} \left[ \ell(\vect s,\vect\pi(\vect s))\right]\nonumber \\
    &\qquad\qquad\quad\mathrm{s.t.}\,\, \rho^{\vect\pi}_{k+1}=\mathcal{T}_{\vect\pi} \rho^{\vect\pi}_{k},\,\, \rho_0=\delta_{\vect s_0}(\cdot)
\end{align}
where $v^\star:\mathcal{X}\rightarrow\mathbb{R}$ is the classic optimal value function. The classic Bellman equation reads:
\begin{align}\label{eq:bell1}
    v^\star(\vect s_0)=\min_{\vect\pi} \ell(\vect s_0,\vect\pi(\vect s_0))+\gamma \mathbb{E}_{\vect s_1\sim \rho^{\vect\pi}_1}\left[v^\star(\vect s_1)\right]
\end{align}
where $\rho^{\vect\pi}_1=\xi(\cdot|\vect s_0,\vect \pi(\vect s_0))$. Moreover, from the Bellman equation associate to \eqref{eq:VL1}, we have:
\begin{align}\label{eq:bell2}
    v^\star(\vect s_0)=\min_{\vect\pi} \ell(\vect s_0,\vect\pi(\vect s_0))+\gamma V^\star[\rho^{\vect\pi}_1]
\end{align}
Comparing \eqref{eq:bell1} and \eqref{eq:bell2}, we have the following relation between the classic optimal value function $v^\star(\vect s)$ and the optimal value functional $V^\star[\rho]$:
\begin{align}\label{eq:VV:rel}
    V^\star[\rho^{\vect\pi}_1]=\mathbb{E}_{\vect s_1\sim\rho^{\vect\pi}_1}\left[v^\star(\vect s_1)\right]
\end{align}
The classic optimal action-value function $q^\star:\mathcal{X}\times\mathcal{U}\rightarrow\mathbb{R}$ can be defined as follows:
\begin{align}\label{eq:bell:cla}
    q^\star(\vect s,\vect \pi(\vect s)):=\ell(\vect s,\vect \pi(\vect s))+\gamma \mathbb{E}_{\vect s^+\sim \rho_1}\left[v^\star(\vect s^+)\right]
\end{align}
where $\rho_1=\xi(\cdot|\vect s,\vect \pi(\vect s))$. Substituting $\rho(\cdot)=\delta_s(\cdot)$ in  \eqref{eq:Q:F}, we have:
\begin{align}\label{eq:QstarQ}
    &{Q}^\star[\delta_{\vect s}(\cdot),\vect \pi]=\ell(\vect s,\vect \pi(\vect s))+\gamma V^\star[\xi(\cdot|\vect s,\vect \pi(\vect s))]\\&\stackrel{(\ref{eq:VV:rel})}=\ell(\vect s,\vect \pi(\vect s))+\gamma \mathbb{E}_{\vect s^+\sim \xi(\cdot|\vect s,\vect \pi(\vect s))}\left[v^\star(\vect s^+)\right]\stackrel{(\ref{eq:bell:cla})}=q^\star(\vect s,\vect \pi(\vect s))\nonumber
\end{align}
This equation shows that the optimal action-value function of a classic MDPs $q^\star(\vect s,\vect \pi(\vect s))$ can be seen as a function action-value function ${Q}^\star[\rho,\vect \pi]$ where the argument of measure $\rho$ is a Dirac measure $\delta_{\vect s}(\cdot)$. Similarly, for the parametric action-value functional one can show that:
\begin{align}\label{eq:QQ2}
    Q_{\vect\theta^\star}[\delta_{\vect s}(\cdot),\vect \pi]=q_{\vect\theta}(\vect s,\vect \pi(\vect s));
\end{align}
where $q_{\vect\theta}(\vect s,\vect \pi(\vect s))$ is a classic parameterized action-value function. Moreover, we denote the parameterized value function by $v_{\vect\theta}$. For $\rho_0=\delta_{\vect s}(\cdot)$, assumption \ref{assum:rich} reads:
\begin{align}\label{eq:QQ0}
    {Q}_{\vect\theta^\star}[\delta_{\vect s}(\cdot),\vect \pi]={Q}^\star[\delta_{\vect s}(\cdot),\vect \pi]
\end{align}
Then using \eqref{eq:QstarQ} and \eqref{eq:QQ2}, \eqref{eq:QQ0} reads:
\begin{align}\label{eq:QQ3}
    {q}_{\vect\theta^\star}(\vect s,\vect \pi(\vect s))={q}^\star(\vect s,\vect \pi(\vect s))
\end{align}
Fortunately condition \eqref{eq:QQ3} is a well-known problem in RL context, especially in the value-based algorithms. Q-learning is a common method to approach \eqref{eq:QQ3} in practice. More specifically, Q-learning uses the following LS optimization problem:
\begin{align}\label{eq:LS}
    \min_{\vect\theta}\, \mathbb E\left[\Big({q}_{\vect\theta}(\vect s,\vect \pi(\vect s)) - q^\star(\vect s,\vect \pi(\vect s))\Big)^2\right],
\end{align}
In fact LS \eqref{eq:LS} tries to find the optimal parameters vector $\vect\theta^\star$ that has the best approximation of the exact optimal action-value function $q^\star$. A richer parameterization and a larger number of  data increase the accuracy of the method. 
\section{Conclusion}\label{sec:Conc}
This paper provided a framework to analyze the functional stability of the closed-loop Markov Chains under the optimal policy resulting from minimizing the expected value of the discounted sum of stage costs for the associated MDPs. We used the dissipativity theory of EMPC in order to characterize stability properties of discounted MDPs that require a storage functional satisfying FSDSD conditions. We showed that using a function approximator based on a finite-horizon OCP allows us to obtain valid storage functional under some conditions. We focused on the Dirac measure to use the theorems in practice and addressed the use of Q-learning as a powerful RL technique to update the parameters. Considering an inaccurate model in the function approximator and providing more theoretical tools for learning and\slash or computing of the storage functional in a numerical example can be the directions of the future works.
\bibliographystyle{IEEEtran}
\bibliography{ECC2022}
\end{document}